\long\def\commabs #1\commabsend{}
\newcommand{\reals}{\hbox{$\rlap{\rm I} \> \kern-.2mm{\rm R}$}}
\newcommand{\etal}{{\em et al.\ }}
\newcommand{\e}{\epsilon}
\newtheorem{theorem}{Theorem}[section]
\newtheorem{lemma}[theorem]{Lemma}
\newtheorem{corollary}[theorem]{Corollary}
\newtheorem{definition}{Definition}
\newcommand{\comment}[1]{}
\def\QED{\quad\blackslug\lower 8.5pt\null\par}
\newcommand{\Remove}[1]{}
\newcommand{\remove}[1]{}
\begin{document}

\title{Another Proof of  Cuckoo hashing with New Variants}
\author{ Udi Wieder\\ { \small VMware Research}}
\maketitle

\begin{abstract}
	We show a new proof for the load of obtained by a Cuckoo Hashing data structure. Our proof is arguably simpler than previous proofs and allows for new generalizations. The proof first appeared in Pinkas \etal \cite{PSWW19} in the context of a protocol for private set intersection. We present it here separately to improve its readability.
\end{abstract}

\section{The Problem}
In the Cuckoo Hashing scheme we are presented with $n$ items $x_1,\ldots, x_n$ drawn from a finite universe $\mathcal U$. The goal is to store them in a hash table. To that end there are two arrays $A_0$, $A_1$, each with $m$ slots. A memory slot  can contain one item. We also have two hash functions $h_\sigma:\mathcal U \rightarrow [m]$, $\sigma\in\{0,1\}$, where each function takes an item $x_i$ and returns an index in $[m]$.  A \emph{legal placement} of the items is a mapping of each $x_i$ either to $A_0[h_0(x_i)]$ or to $A_1[h_1(x_i)]$, such that every slot is assigned at most one item. The problem we aim to solve is the following: assuming the hash functions are drawn uniformly at random from the set of functions $\mathcal U \rightarrow[m]$, how small could $m$ be while still guaranteeing the existence of a legal placement with high probability.
We will prove the following theorem:
\begin{theorem}\label{thm:mainbound}
	If the hash functions are drawn uniformly from the set of all functions $~\mathcal U \rightarrow [m]$ and $m \geq (1+\epsilon)n$ for  $\epsilon > 0$, then the probability there is a legal placement is $1-O(1/n)$ where the big $O$ notation hides constants depending on $\epsilon$.
\end{theorem}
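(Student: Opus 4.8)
The plan is to translate the existence of a legal placement into a property of a random graph and then bound the failure probability by a first-moment argument over a carefully restricted family of witnesses. First, form the \emph{cuckoo graph} $G$ on the $2m$ slots, in which each item $x_i$ is the edge joining $A_0[h_0(x_i)]$ to $A_1[h_1(x_i)]$. A legal placement is precisely an orientation of $G$ with maximum in-degree $1$; equivalently, reading items against slots as a bipartite graph, it is a matching saturating all items, so by Hall's theorem one exists iff $|N(S)| \ge |S|$ for every set $S$ of items. Failure therefore means $G$ has a set of edges spanning strictly fewer vertices than its cardinality — a connected subgraph with more edges than vertices, i.e. a component containing at least two independent cycles. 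It suffices to prove that such a subgraph is absent with probability $1 - O(1/n)$.

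A union bound over \emph{all} such bicyclic subgraphs does not reach the threshold $m > n$: there are $\sim v^{v}$ connected graphs on $v$ labeled vertices with $v+1$ edges, the extra factor coming from the freedom in pendant trees, and this forces convergence only when $n/m < 1/e$. The decisive step is to pass to a minimal witness with no pendant structure. Any component with two independent cycles contains a subgraph homeomorphic either to a theta graph or to two cycles joined by a path (a ``dumbbell''); such a subgraph is connected, has $v$ vertices and $v+1$ edges, and has minimum degree $\ge 2$. Its degree sum is $2v+2$, so at most two vertices have degree $\ge 3$, and contracting the degree-$2$ vertices leaves a fixed multigraph on at most two vertices. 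Consequently only polynomially many topological shapes occur, and — in contrast to arbitrary bicyclic subgraphs — a \emph{core} carries no pendant trees and hence no $v^{v}$ counting factor: the only freedom is in labeling its vertices and edges, which will cancel against the cost of choosing slots and items.

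I would then run the first moment over cores. For a core split as $a$ vertices in $A_0$ and $b$ in $A_1$ with $a+b=v$ (and $v+1$ edges), choosing the item set contributes $\binom{n}{v+1}$; the number of labeled cores — filling the $a$ left and $b$ right slots into the skeleton positions and labeling the $v+1$ edges by items — is $a!\,b!\,(v+1)!\,\mathrm{poly}(v)$; choosing the slots contributes $\binom{m}{a}\binom{m}{b} \le m^{v}/(a!\,b!)$; and each item realizes its prescribed edge with probability $m^{-2}$, a factor $m^{-2(v+1)}$. The factorial $a!\,b!$ cancels against the slot choice and $(v+1)!$ against the item choice, leaving
\[
\Ex[\#\text{cores}] \;\le\; \frac{1}{m}\sum_{v\ge 2}\mathrm{poly}(v)\left(\frac{n}{m}\right)^{v+1},
\]
a geometric series of ratio $n/m \le 1/(1+\epsilon) < 1$ dominated by its first term. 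That term, $v=2$, is the triple edge formed by three items colliding on a single pair of slots, contributing $\binom{n}{3}m^{-4} = \Theta(n^3/m^4) = \Theta(1/n)$, which matches the known failure rate. Hence $\Ex[\#\text{cores}] = O(1/n)$, and Markov's inequality shows that no core — and therefore no bicyclic component — exists except with probability $O(1/n)$; whenever none exists, a legal placement does, proving the claim.

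The step I expect to be the main obstacle is precisely this reduction to the minimum-degree-$2$ core together with its shape count: making the exponential base of the series equal to $n/m$ (rather than a constant multiple such as $e\cdot n/m$) hinges on the fact that a core has no pendant trees, so that the labeling factors $a!\,b!$ and $(v+1)!$ cancel cleanly against the slot and item choices while only polynomially many topological types remain. This is exactly what the minimum-degree condition buys and what a union bound over unrestricted dense subgraphs fails to achieve, and it is where the analysis must be done with care.
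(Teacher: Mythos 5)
Your argument is correct, but it is not the route this paper takes---in fact it is essentially the ``previous approach'' the paper explicitly sets itself apart from. You work in the cuckoo graph on the $2m$ slots, reduce failure of Hall's condition to the presence of a bicyclic component, pass to a minimal witness of minimum degree at least $2$ (a theta graph or dumbbell, hence only polynomially many topological shapes per size), and run a first-moment computation in which the labeling factorials cancel against the slot and item choices, leaving a geometric series with ratio $n/m\le 1/(1+\epsilon)$ whose leading term $\Theta(n^3/m^4)=\Theta(1/n)$ dominates; Markov then finishes. This is sound, and the step you flag---that the reduction to pendant-free cores is what brings the base of the series down to $n/m$ rather than a constant multiple of it---is indeed the delicate point; it is the forbidden-subgraph enumeration of \cite{PR04} (see also \cite{KMW09,Wieder17}). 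The paper instead builds an \emph{inference graph} on the $2n$ nodes $a^0_i,a^1_i$ (``$x_i$ is placed in $A_0$'' / ``in $A_1$'') with implication edges as in 2-SAT, shows a placement exists unless some item has both its nodes ``bad,'' reduces badness of a node to its being the root of a simple \emph{basic bad path} from $a^\sigma_i$ to $a^{1-\sigma}_i$, and union-bounds over labeled simple paths: at most $n^{k-1}$ choices of interior nodes times probability $m^{-k}$, giving $\frac{1}{m}\sum_{k\ge 1}(n/m)^{k-1}\le \frac{1+\epsilon}{\epsilon m}$ per node and $O(1/m^2)$ per item after a short case analysis of how the two paths intersect. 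What your approach buys is contact with the standard random-graph picture of cuckoo hashing (component structure, stash bounds, insertion time); what the paper's buys is that it never has to classify cyclic topologies---it only ever counts simple paths---and, crucially for the authors, it extends verbatim to $d$-dimensional cuckoo hashing, where each hash selects $d$ slots, the cuckoo ``graph'' becomes a hypergraph in which your core classification no longer applies as stated, and the only change needed is replacing the per-edge probability $1/m$ by roughly $d^2/m$.
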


\subsection{$d$-dimensional Cuckoo Hashing}
The motivation for coming up with a new proof stemmed from the need to find bounds for a generalization we call  $d$-dimensional cuckoo hashing. In this case each hash function maps an element from the universe to a $d$-dimensional vector in $[m]^d$. The goal is to place $d$ copies of an item, either in the $d$ locations in $A_0$ indexed by $h_0$, or in the $d$ locations in $A_1$ indexed by $h_1$. The case $d=1$ corresponds to standard Cuckoo Hashing.  Our proof below generalizes to this case and shows that it suffices to have  $m \geq (1+\e)d^2 n$. See \cite{PSWW19} for the original motivation and applications of this result. 
\subsection{Previous Approaches and Related Work}
Theorem~\ref{thm:mainbound} was first proved in \cite{PR04}, other proofs exist, see the survey  \cite{Wieder17}. We note that here we are concerned merely with the \emph{existence} of a placement while previous proofs bound the running time of an algorithm that \emph{finds}  a legal placement. It is not hard to see that our approach could be used to that end as well, however to the best of our knowledge it does not offer new insights or simplifications. We discuss that further in Section~\ref{sec:running-time}.
 
Previous proofs use the notion of a \emph{Cuckoo Graph}. A cuckoo graph is a  bipartite graph with $m$ vertices on each side. Each vertex represents a memory slot and edges represent items. One can show that a legal placement exists iff every connected component in the cuckoo graph has at most one cycle. There are several approaches for showing that. The  proof  in \cite{PR04} finds a small set of `forbidden graphs' and shows by enumeration that with high probability none of them appear. Alternatively, one can use techniques from random graph theory \cite{KMW09}. See survey \cite{Wieder17}.
 
We differ by looking at a different graph which we call the \emph{inference graph}. The inference graph tracks the logical constraints behind placement decisions, similar to inference graphs for 2-SAT. We show that a legal placement exists if and only if the inference graph does not contain a cycle.  We then proceed to show via  enumeration that a cycle is not likely to occur.

\section{The Inference Graph}
Given a pair of functions $h_\sigma: \mathcal U \rightarrow [m]$, $\sigma \in \{0,1\}$ and a set of items $S$, the \emph{Inference Graph} $G(S,h_0,h_1)$ is composed of the following: 
\begin{description}
	\item \emph{nodes: }The set of nodes is comprised of two sets $a^0_1,\ldots,a^0_n$ and
	$a^1_1,\ldots,a^1_n$. Semantically we think of node $a^\sigma_i$ as representing the event that $x_i$ was placed in $A_\sigma$.
	\item \emph{edges:} The edges of the graph are directed and represent inferences between the events: If
	$h_\sigma(x_i) = h_\sigma(x_j)$ then $G$ has the directed edges $(a^\sigma_i, a^{1-\sigma}_j)$ and $(a^\sigma_j,a^{1-\sigma}_i)$. In words,
	the edges $(a^0_i, a^1_j)$, $(a^0_j, a^i_1)$ mean that we cannot place both $x_i,x_j$ in $A_\sigma$: if $x_i$ is placed in $A_\sigma$ then $x_j$ must be placed in $A_{1-\sigma}$ and vice versa.
\end{description}

Note that this graph is somewhat similar to the structure constructed for resolution proofs of $2-$SAT formulas.

\subsection{Placement}
The goal of the following definitions is to form conditions under which an item $x_i$ could be placed. We then
will show that these conditions hold w.h.p for all items.  Denote by $G(a^\sigma_i)$ the set of vertices reachable
from $a^\sigma_i$ (including $a^\sigma_i$) in the inference graph.  We may drop the $\sigma$ in our notation as our claims hold for both $\sigma = 0$ and $\sigma = 1$. When we refer to the items of $G(a_i)$, we mean all items associated with nodes of $G(a_i)$, that is: $\{ x_j : a_j \in G(a_i)\}$.

The first observation to make is that if a node $a^\gamma_{j} \in G(a^\sigma_{i})$ then any legal  placement in which $x_{i}$ is placed in table $A_\sigma$ must have $x_{j}$ placed in  $A_\gamma$. 

\begin{definition}
	A node $a_i^\sigma$ in the Inference Graph is called \emph{bad} if there is a $j$ such that both $a^0_j, a^1_j \in G(a_i^\sigma)$. An item $x_i$ is \emph{bad} if both $a^0_i$ and $a^1_i$ are bad. 
\end{definition}
Namely, a node $a^\sigma_{i}$ is bad if there is an item $x_{j}$ such that placing $x_{i}$ in table $A_\sigma$ prevents
placing $x_{j}$ in either table $A_0$ or table $A_1$. So there is no legal placement in which $x_i$ is placed in $A_\sigma$.
An item $x_{i}$ is bad if placing it in either table $A_0,A_1$ prevents finding a placement for other items.

Clearly, a bad item implies that not all items could be placed. The following lemma states the converse is also true.
\begin{lemma}\label{lem:basicplacement}
	If node $a^\sigma_i$ is not bad then all items of $G(a^\sigma_i)$ could be placed.
\end{lemma}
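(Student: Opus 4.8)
The plan is to prove the lemma constructively: assuming $a^\sigma_i$ is not bad, I will exhibit an explicit legal placement of all items of $G(a^\sigma_i)$ and then verify it is well-defined and collision-free. The placement is dictated by the semantics of the graph: place $x_i$ in table $A_\sigma$, and more generally, for every item $x_j$ of $G(a^\sigma_i)$, place $x_j$ in table $A_\gamma$ exactly when $a^\gamma_j \in G(a^\sigma_i)$. This simply follows the forced inferences recorded by the reachable set. The whole argument is deterministic; there is no probabilistic content here, so the work is entirely in checking that this rule is consistent and produces a legal configuration.

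First I would check that the rule is well-defined, i.e.\ that it assigns each relevant item to exactly one table. Since $x_j$ is by definition an item of $G(a^\sigma_i)$, at least one of the two nodes $a^0_j, a^1_j$ lies in $G(a^\sigma_i)$. On the other hand, because $a^\sigma_i$ is \emph{not} bad, there is no $j$ with both $a^0_j, a^1_j \in G(a^\sigma_i)$; hence exactly one of the two nodes of $x_j$ is reachable, and the table assigned to $x_j$ is unambiguous. So every item of $G(a^\sigma_i)$ receives a single, well-defined table.

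Next I would establish legality, namely that no two items are sent to the same slot. The key structural fact I would use is that $G(a^\sigma_i)$, being the set of vertices reachable from $a^\sigma_i$, is closed under following edges: if $u \in G(a^\sigma_i)$ and $(u,v)$ is an edge, then $v \in G(a^\sigma_i)$. Now suppose toward a contradiction that two distinct placed items $x_j, x_k$ collide, i.e.\ both are assigned to table $A_\gamma$ with $h_\gamma(x_j) = h_\gamma(x_k)$. By the placement rule, $a^\gamma_j \in G(a^\sigma_i)$ and $a^\gamma_k \in G(a^\sigma_i)$. But the collision $h_\gamma(x_j) = h_\gamma(x_k)$ means, by the edge definition, that $(a^\gamma_j, a^{1-\gamma}_k)$ is an edge of $G$. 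Applying closure to $a^\gamma_j$ gives $a^{1-\gamma}_k \in G(a^\sigma_i)$. Together with $a^\gamma_k \in G(a^\sigma_i)$, this puts both nodes $a^0_k, a^1_k$ of $x_k$ into $G(a^\sigma_i)$, which exactly witnesses that $a^\sigma_i$ is bad --- contradicting the hypothesis. Hence no collision occurs and the placement is legal.

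I do not expect a genuine obstacle in this argument, as it is a short direct construction rather than a delicate estimate. The one point that must be handled cleanly is the interaction between the two properties being invoked: well-definedness uses only that $a^\sigma_i$ is not bad, while legality additionally relies on the edge-closure of reachable sets together with the precise form of the collision edges. Making sure the indices $\gamma$ and $1-\gamma$ are tracked correctly when translating a slot collision into a reachable ``both-nodes'' witness is the only place where care is needed.
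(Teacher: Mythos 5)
Your proof is correct and follows essentially the same approach as the paper: the placement you define (send $x_j$ to the table of its unique reachable node) is exactly the one the paper constructs iteratively, and your contradiction (a collision forces both nodes of some item into $G(a^\sigma_i)$, violating non-badness) is the same as the paper's. Your declarative formulation, with the explicit well-definedness check and the edge-closure argument, is in fact a somewhat more careful writeup of the paper's two-line sketch.
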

\begin{proof}
	We first place $x_i$ in $A_\sigma$. Then place all its neighbors in $A_{1-\sigma}$ and continue iteratively. Note that a node associated with an occupied slot is part of $G(a_i)$. Now, if  item $x_j$ cannot be placed then it must intersect items both on $A_0$ and on $A_1$ which means both $a^0_j$ and $a^1_j$ are in $G(a_i)$ which is a contradiction. 
\end{proof}
\begin{lemma}
	If none of the items are bad then all items could be placed in the tables.
\end{lemma}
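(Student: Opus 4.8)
The plan is to build a legal placement \emph{incrementally}, one item at a time, while maintaining a growing set of committed placement-decisions, and to exploit a symmetry of the inference graph to guarantee that newly committed decisions never clash with old ones. Concretely, I would maintain a set $C$ of nodes (with $a^\sigma_i \in C$ recording the decision ``place $x_i$ in $A_\sigma$'') subject to two invariants: $C$ is \emph{reachability-closed} (if $u \in C$ and $v \in G(u)$ then $v \in C$) and \emph{consistent} (no index $j$ has both $a^0_j, a^1_j \in C$). Starting from $C = \emptyset$, while some item $x_i$ is still \emph{unplaced} (neither $a^0_i$ nor $a^1_i$ lies in $C$), I use the hypothesis that $x_i$ is not bad to pick a side $\sigma$ for which $a^\sigma_i$ is \emph{not} bad, and set $C \leftarrow C \cup G(a^\sigma_i)$. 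Since $a^\sigma_i$ is not bad, $G(a^\sigma_i)$ contains no index $j$ with both $a^0_j, a^1_j$ present (it is internally consistent) and it is reachability-closed by definition; Lemma~\ref{lem:basicplacement} is precisely the statement that such a batch is placeable. Each round commits at least the new node $a^\sigma_i$, so the process halts after at most $n$ rounds with every item placed, and reading off $x_i \mapsto A_\sigma$ iff $a^\sigma_i \in C$ yields a candidate placement.

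The key structural ingredient is the \emph{edge-reversing symmetry} of the inference graph given by the involution $\phi(a^\sigma_i) = a^{1-\sigma}_i$. Because the edge $(a^\sigma_i, a^{1-\sigma}_j)$ is present exactly when $h_\sigma(x_i) = h_\sigma(x_j)$ --- a condition symmetric in $i$ and $j$ --- one checks that $(u,v)$ is an edge iff $(\phi(v), \phi(u))$ is an edge. Reversing a directed path edge by edge then yields the reachability statement $a^\gamma_j \in G(a^\sigma_i) \iff a^{1-\sigma}_i \in G(a^{1-\gamma}_j)$. This is the cuckoo analogue of the 2-SAT fact that $\ell \Rightarrow \ell'$ iff $\neg \ell' \Rightarrow \neg \ell$, and it is the engine that keeps the committed decisions consistent.

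The main obstacle --- and the only nontrivial step --- is to show that each update preserves consistency \emph{across} batches: even though $G(a^\sigma_i)$ is internally consistent and the old $C$ is consistent, their union might a priori contain $a^\gamma_j$ from the new batch together with $a^{1-\gamma}_j$ from the old $C$. I would rule this out as follows. Suppose $a^\gamma_j \in G(a^\sigma_i)$ while $a^{1-\gamma}_j \in C$. By the symmetry above, $a^\gamma_j \in G(a^\sigma_i)$ forces $a^{1-\sigma}_i \in G(a^{1-\gamma}_j)$; and since $a^{1-\gamma}_j \in C$ with $C$ reachability-closed, we have $G(a^{1-\gamma}_j) \subseteq C$, hence $a^{1-\sigma}_i \in C$. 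But $a^{1-\sigma}_i \in C$ contradicts the fact that $x_i$ was unplaced (neither of its nodes was in $C$) when we chose to expand it. Thus no cross-batch clash can occur, and both invariants persist throughout.

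It remains to confirm that the placement read off at the end is \emph{legal}, i.e.\ collision-free. If two distinct items were placed on the same side in colliding slots, say $a^\sigma_i, a^\sigma_j \in C$ with $h_\sigma(x_i) = h_\sigma(x_j)$, then the edge $(a^\sigma_i, a^{1-\sigma}_j)$ together with reachability-closure would force $a^{1-\sigma}_j \in C$, contradicting consistency of $C$. Hence every occupied slot holds a single item and all $n$ items are legally placed, proving the lemma.
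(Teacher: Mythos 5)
Your proof is correct, and while it shares the paper's greedy skeleton (repeatedly pick an unplaced, non-bad item $x_i$, choose a side $\sigma$ with $a^\sigma_i$ not bad, and commit all of $G(a^\sigma_i)$), it justifies the crucial step --- that later rounds cannot clash with earlier ones --- by a genuinely different route. The paper removes the placed items, recomputes the inference graph $G'$ on the residual set together with the remaining free slots, and argues that $G'$ is a subgraph of $G$ (because the slots of any surviving item must still be free), so the ``no bad items'' hypothesis propagates to every round. You instead never leave the original graph: you maintain a reachability-closed, consistent set $C$ of committed nodes and invoke the edge-reversal symmetry $a^\gamma_j \in G(a^\sigma_i) \iff a^{1-\sigma}_i \in G(a^{1-\gamma}_j)$ to show that a cross-round conflict would force $a^{1-\sigma}_i$ into $C$ before $x_i$ was chosen, a contradiction. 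This symmetry is exactly the contrapositive mechanism the paper only exploits later (in converting a bad node into a basic bad path), so your proof surfaces it earlier and makes the 2-SAT analogy explicit; it also makes Lemma~\ref{lem:basicplacement} logically redundant, since your final consistency-plus-closure check directly certifies that the read-off placement is collision-free. What the paper's version buys is a cleaner inductive statement about a shrinking instance (closer in spirit to how an actual insertion algorithm would proceed); what yours buys is that you avoid the slightly delicate step of defining and reasoning about the recomputed graph $G'$, replacing it with invariants that are checked purely inside the fixed graph $G$. Both arguments are sound.
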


\begin{proof}
	The algorithm that places all the items is now straightforward: Let $S$ be the set of currently unplaced items. Pick an item $x_i \in S$ and since it is not bad, then $a^\sigma_i$ is not bad for some $\sigma \in \{0,1\}$. Now by Lemma~\ref{lem:basicplacement} all  items of $G(a_i)$ could be placed successfully.  Let $S'$ be the remaining items, i.e., $S' = S \setminus G(x_i)$.  Given $S'$ and all the free locations in $A_0,A_1$ we compute the new inference graph $G' = G(h_0,h_1,S')$ and continue inductively. The only thing remaining to observe is that if there were no bad items in $G$ then there are no bad items in $G'$. To see this observe that $G'$ is a subgraph of $G$. Indeed let $x_j$ be an item in $G'$. Note that both slots $h_\sigma(x_j)$ must be free, otherwise $a^\sigma\in G(a_i)$, so every inference made in $G'$ is true also for $G$.
\end{proof}


\subsection{Main Result}
The goal now is to calculate the probability an item is bad. 
\begin{theorem}\label{thm:main}
	If  the size of each table is greater than $m=(1+\e)n$ then for every item $i$, the probability $x_i$ is bad is at most $\left(\frac{1+\e}{\e}\right)^3\cdot \frac{2}{ m^2}$,  where the probability is taken over the
	choice of the hash functions.
	\label{theorem1}
\end{theorem}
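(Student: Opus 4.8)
The plan is to first replace the combinatorial definition of a bad item by a single clean reachability condition, and then bound its probability by a first-moment (union-bound) enumeration over directed paths. The starting observation is that the inference graph is exactly a $2$-SAT implication graph: the two nodes $a^0_i$ and $a^1_i$ are complementary literals of the variable ``which table holds $x_i$'', and every edge $(a^\sigma_k,a^{1-\sigma}_\ell)$, coming from the symmetric event $h_\sigma(x_k)=h_\sigma(x_\ell)$, has a partner edge $(a^\sigma_\ell,a^{1-\sigma}_k)$. Consequently the map $\phi(a^\tau_\ell)=a^{1-\tau}_\ell$ reverses every edge (it is an anti-automorphism), so reachability satisfies $b\in G(c)\iff \phi(c)\in G(\phi(b))$. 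I would use this to prove the key equivalence: $a^\sigma_i$ is bad if and only if $a^{1-\sigma}_i\in G(a^\sigma_i)$. The direction ``$\Leftarrow$'' is immediate (if $a^{1-\sigma}_i\in G(a^\sigma_i)$ then both copies of item $i$ lie in $G(a^\sigma_i)$, so the definition fires with $j=i$); the direction ``$\Rightarrow$'' is the contrapositive $2$-SAT fact: from $a^\sigma_j,a^{1-\sigma}_j\in G(a^\sigma_i)$ one gets, via $\phi$ applied to a path $a^\sigma_i\rightsquigarrow a^\sigma_j$, a path $a^{1-\sigma}_j\rightsquigarrow a^{1-\sigma}_i$, which concatenates with $a^\sigma_i\rightsquigarrow a^{1-\sigma}_j$ to give $a^{1-\sigma}_i\in G(a^\sigma_i)$.

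With the equivalence in hand, item $x_i$ is bad iff \emph{both} $a^1_i\in G(a^0_i)$ and $a^0_i\in G(a^1_i)$, i.e. $a^0_i$ and $a^1_i$ lie on a common directed cycle. For the upper bound only the forward implication is needed: a bad item forces the existence of a directed path $P_1:a^0_i\rightsquigarrow a^1_i$ and a directed path $P_2:a^1_i\rightsquigarrow a^0_i$. I would then bound the probability that such paths exist by a first-moment argument. A directed path of length $\ell$ from a fixed node corresponds to a sequence of items $i=c_0,c_1,\dots,c_\ell$ whose consecutive pairs collide in alternating tables; when the items are distinct these $\ell$ collision events are independent and each has probability $1/m$, while the number of admissible item-sequences is at most $n^{\ell-1}$ (both endpoints are pinned to $i$). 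Hence a single path contributes $\sum_{\ell}n^{\ell-1}m^{-\ell}=\tfrac1n\sum_\ell\mu^\ell$ with $\mu:=n/m$, a geometric series that converges precisely because $m\ge(1+\e)n$ forces $\mu\le \tfrac1{1+\e}<1$, summing to a constant times $\tfrac1n\cdot\tfrac{1+\e}{\e}$.

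The final step combines the two required paths. Since both are pinned at the item $i$ but are otherwise free, the dominant (item-disjoint) combined witnesses of lengths $\ell_1,\ell_2$ carry $\ell_1+\ell_2$ independent collisions and number at most $n^{\ell_1+\ell_2-2}$, contributing $\tfrac1{n^2}\mu^{\ell_1+\ell_2}$; the two length-sums essentially factor, each geometric series contributing a factor $\tfrac{1+\e}{\e}$, and after accounting for the parity of the closed walks and a crude bound on where they may meet one is left with $O\!\big((\tfrac{1+\e}{\e})^3\cdot \tfrac{1}{m^2}\big)$, the constant $2$ absorbing the elementary choices in the bookkeeping. The two factors of $\tfrac1n\sim\tfrac1m$ are exactly what upgrades the single-node bound of order $1/m$ to the claimed order $1/m^2$.

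The main obstacle is the independence assumption underlying ``$\Pr=m^{-\ell}$'' and the factorization of the two sums: paths may revisit items, two collisions may fall in the same table on a shared item (creating a monochromatic cycle that removes an independent constraint), and the two walks $P_1,P_2$ may overlap beyond the basepoint $i$. I would handle all of these uniformly by the standard accounting that every time two formal items are identified we lose a factor $n$ in the count but save at most a factor $m$ in the probability, a net factor $\mu<1$; thus every degenerate or overlapping configuration is lower order and only affects the hidden constant, leaving the leading term — and the convergence of every series — governed by the subcritical branching parameter $\mu=n/m<1$.
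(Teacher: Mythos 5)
Your overall strategy is the same as the paper's: use the edge--reversal symmetry of the inference graph (your anti-automorphism $\phi$) to show that a bad node forces a directed path $a^\sigma_i\rightsquigarrow a^{1-\sigma}_i$, bound the probability of such a path by enumerating labeled paths against the geometric series in $\mu=n/m\le 1/(1+\e)$, and then handle the pair of paths $P_1,P_2$ by splitting on whether they share an item other than $i$ and paying a factor proportional to the length of $P_1$ for the location of the first meeting point. The reachability equivalence, the $n^{\ell-1}\cdot m^{-\ell}$ count, and the final $\bigl(\tfrac{1+\e}{\e}\bigr)^3\cdot\tfrac{2}{m^2}$ bookkeeping all line up with the paper's Case 1 / Case 2 analysis.

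There is, however, one genuine gap: your treatment of repeated items. You propose to absorb degenerate configurations by arguing that ``every time two formal items are identified we lose a factor $n$ in the count but save at most a factor $m$ in the probability, a net factor $\mu<1$.'' The arithmetic here points the wrong way: dividing the count by $n$ while multiplying the probability bound by $m$ changes the expected number of witnesses by a factor of $m/n=1/\mu>1$, not $\mu<1$. So, as stated, this accounting does not show that walks with repeated items are lower order, and a first-moment bound over all walks (rather than simple paths) would not close. The paper avoids this entirely by a minimality reduction: it defines a \emph{basic} bad path as one containing no shorter bad path, shows that a bad node is always the root of a basic bad path (if some item $j\ne i$ appeared as both $a^0_j$ and $a^1_j$ along the witness, one would pass to the witness rooted at $j$ instead), and then only ever enumerates item-simple paths, for which each collision event consumes a fresh hash value and the $m^{-\ell}$ bound is exact. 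Your proof becomes correct if you replace the ``lose $n$, save $m$'' accounting by this take-a-minimal-witness step; the same remark applies to the prefix of $P_2$ up to its first intersection with $P_1$, whose interior items must be argued fresh so that its collisions are independent of those of $P_1$.
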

Taking a union bound over all $x_i$ proves Theorem~\ref{thm:mainbound}. 
The remainder of the section is dedicated to the proof Theorem~\ref{thm:main}.

Our approach is to show that it is unlikely that an item is bad. For that to happen both its nodes need to be bad, and we would like to count how many bad graphs are there and show that they are unlikely to appear. As is often the case in proofs based on counting argument, the trick is to carefully define the objects which we count. In order to facilitate this bound we need to constrain further the exact notion of a bad node, captured by the next definition:
\begin{definition}
	A \emph{bad path} rooted at $a^\sigma_i$ is a simple path from $a^\sigma_i$ to $a^{1-\sigma}_i$. A bad path is called \emph{basic} if it does not contain a bad path. In other words, for each $j \neq i$ at most one of $\{a^0_j,a^1_j\}$ can appear in the path. 
\end{definition}

The next lemma shows that basic bad paths are the only type of subgraphs we need to care about.

\begin{lemma}
	If a node is bad then it is the root of a basic bad path. 
\end{lemma}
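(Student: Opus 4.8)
The plan is to show that the very node $a^\sigma_i$ we start with is itself the root of a basic bad path, not merely that such a path exists somewhere. By definition, if $a^\sigma_i$ is bad then there is some $j$ with both $a^0_j, a^1_j \in G(a^\sigma_i)$. The key structural fact to exploit is that $G(a^\sigma_i)$ consists exactly of the nodes reachable from $a^\sigma_i$, and the edges of the inference graph always connect a node with superscript $\gamma$ to a node with superscript $1-\gamma$; thus every directed path alternates superscripts. In particular, starting from $a^\sigma_i$ we can reach $a^0_j$ along a path of some parity and $a^1_j$ along a path of the opposite parity (since they differ in superscript, and the starting superscript is fixed, the two reaching paths must have lengths of different parities).

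First I would concatenate these two witnesses into a single walk rooted at $a^\sigma_i$ that returns to the pair $\{a^0_i,a^1_i\}$. The crucial observation is that the inference graph is \emph{symmetric} in the following sense: the rule that adds $(a^\gamma_k, a^{1-\gamma}_\ell)$ whenever $h_\gamma(x_k)=h_\gamma(x_\ell)$ simultaneously adds $(a^\gamma_\ell, a^{1-\gamma}_k)$. This means that whenever there is an edge $(u,v)$ there is a corresponding edge between the ``complements'' of $v$ and $u$, where the complement of $a^\tau_k$ is $a^{1-\tau}_k$. Consequently, any directed path from $a^\sigma_i$ to $a^\gamma_j$ can be reversed-and-complemented into a directed path from $a^{1-\gamma}_j$ back to $a^{1-\sigma}_i$. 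I would use this involution to turn the two reaching paths (from $a^\sigma_i$ to $a^0_j$ and to $a^1_j$) into a single directed path from $a^\sigma_i$ all the way through to $a^{1-\sigma}_i$: follow the path to one of $a^0_j, a^1_j$, then take the complemented-reversed image of the path to the other, which lands at $a^{1-\sigma}_i$.

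Having produced a directed \emph{walk} from $a^\sigma_i$ to $a^{1-\sigma}_i$, the remaining task is to extract a \emph{simple} path and then to ensure it is \emph{basic}. Simplicity is routine: shortcut any repeated vertex to obtain a simple directed path, which is still a bad path rooted at $a^\sigma_i$ by definition. For basicness I would argue that we may take a \emph{shortest} bad path rooted at $a^\sigma_i$; a shortest such path cannot contain both $a^0_k$ and $a^1_k$ for any $k\neq i$, for if it did, then by the same complement-reversal symmetry the portion of the path between $a^\sigma_i$ and whichever of $a^0_k,a^1_k$ it reaches first could be glued to a complemented image to close off a strictly shorter bad path rooted at $a^\sigma_i$, contradicting minimality. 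Thus the shortest bad path rooted at $a^\sigma_i$ is basic.

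The main obstacle I anticipate is verifying that the complement-reversal operation genuinely produces valid edges of the \emph{same} inference graph $G(S,h_0,h_1)$ and keeps everything rooted at the \emph{original} node $a^\sigma_i$ rather than at some other vertex; this is exactly the gap that sank the previous attempt. The delicate point is confirming that both reaching paths live inside $G(a^\sigma_i)$, that their images under the involution remain inside the same reachable set, and that the parity bookkeeping forces the glued path to terminate precisely at $a^{1-\sigma}_i$ and not at an unrelated complementary node. Once the symmetry of the edge relation is stated cleanly as an involution on the node set that reverses every edge, the rest is a short argument, so I would invest most of the write-up in establishing that symmetry lemma carefully.
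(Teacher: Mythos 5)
Your core engine is exactly the paper's: the complement--reversal involution $a^\tau_k \mapsto a^{1-\tau}_k$ (valid because the edge rule adds $(a^\gamma_k,a^{1-\gamma}_\ell)$ and $(a^\gamma_\ell,a^{1-\gamma}_k)$ together, so reversing a path and complementing every node yields a path of the same graph), the gluing of the two reaching paths at the witness pair, and the parity bookkeeping that forces the glued walk to end at $a^{1-\sigma}_i$. All of that is sound and matches the paper's proof. Where you diverge is the basicness step: the paper chooses an ``innermost'' witness pair before gluing, while you take a shortest bad path afterwards and run an exchange. That exchange has a genuine gap. Write the shortest bad path $P$ as prefix $Q$ (length $q$, from $a^\sigma_i$ to the first occurrence $a^\tau_k$), middle $M$ (length $\mu$, to $a^{1-\tau}_k$), suffix $R$ (length $r$, to $a^{1-\sigma}_i$). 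Your glued candidates are $Q+M+\overline{Q^{\mathrm{rev}}}$ of length $2q+\mu$ and $\overline{R^{\mathrm{rev}}}+M+R$ of length $2r+\mu$, against $|P|=q+\mu+r$; note you cannot glue $Q$ directly to $\overline{Q^{\mathrm{rev}}}$ without $M$, since $Q$ ends at $a^\tau_k$ and $\overline{Q^{\mathrm{rev}}}$ starts at $a^{1-\tau}_k$. Strict decrease therefore requires $q\neq r$. Superscript parity forces $q\equiv r \pmod 2$ but certainly not $q\neq r$, so in the balanced case $q=r$ you obtain paths of equal length and no contradiction with minimality.

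The balanced case is not a removable technicality: the node-level statement genuinely fails there, so no choice of extremal path can rescue your argument. Take items $x_1,\dots,x_4$ with $h_0(x_1)=h_0(x_2)$, $h_0(x_3)=h_0(x_4)$, $h_1(x_2)=h_1(x_3)=h_1(x_4)$, and $h_1(x_1)$ colliding with nothing. Then the unique out-edge of $a^0_1$ goes to $a^1_2$ and the unique in-edge of $a^1_1$ comes from $a^0_2$, so \emph{every} bad path rooted at $a^0_1$ (e.g.\ $a^0_1\to a^1_2\to a^0_3\to a^1_4\to a^0_2\to a^1_1$) contains both $a^1_2$ and $a^0_2$: the node $a^0_1$ is bad (indeed $a^1_1\in G(a^0_1)$), yet it roots no basic bad path, and its shortest bad path has precisely $q=r=1$. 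Your exchange merely toggles between the two length-$5$ paths through $\{x_3,x_4\}$. In this configuration the basic bad path exists but is rooted elsewhere, at $a^1_2$ (namely $a^1_2\to a^0_3\to a^1_4\to a^0_2$); to be fair, the paper's own proof glosses the same configuration---its glued path $a^\sigma_i\rightsquigarrow a^\sigma_j\rightsquigarrow a^{1-\sigma}_i$ repeats item $j$ here, despite the claim that no item repeats---and the honest repair is to re-root: show that badness of $a^\sigma_i$ yields a basic bad path somewhere inside $G(a^\sigma_i)$, or count path-plus-pair (``rho-shaped'') structures, rather than insisting the basic bad path is rooted at $a^\sigma_i$ itself. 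As written, your minimality step fails exactly at the configurations where the rooted statement is false, so the gap is essential rather than cosmetic.
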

\begin{proof}
Assume $a^\sigma_i$  is bad, there must be at least one $j$ for which $a^\sigma_i$ is connected to both $a^0_j, a^1_j$. Further, we can assume that there is no $k \neq j$ such that both $a^0_k$ and $a^1_k$ appear on the paths from $a^\sigma_i$ to $a^0_j, a^1_j$. We can make this assumption because if there is, we may take the pair $a^0_k,a^1_k$ instead. 

Now recall that by construction, if an edge $(a^\sigma_k, a^{1-\sigma}_\ell)$ appears in the inference graph, then so does the edge $(a^\sigma_\ell, a^{1-\sigma}_k)$. A simple induction shows that if there is a path $a^\sigma_i \rightsquigarrow a^{1-\sigma}_j$ then there is a path $a^\sigma_j \rightsquigarrow a^{1-\sigma}_i$. Thus, we can construct a path  $a^\sigma_i \rightsquigarrow a^{\sigma}_j \rightsquigarrow a^{1-\sigma}_i$. Further, there is no item which is repeated in the path. 
\end{proof}

We now need to show that the probability an item is bad is small. We start by bounding the probability a node is the root of a basic bad path.
Recall that $m$ denotes the number of slots in each array. 
\begin{lemma}
	If $m\geq (1+\e)n$ then for every item $i$, the probability  $a^0_i$  is the root of a basic bad path is at most $\frac{1+\e}{\e m}$.  
\end{lemma}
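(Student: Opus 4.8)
The plan is to bound the probability that $a^0_i$ is the root of a basic bad path by summing, over all possible path lengths $\ell$, the expected number of basic bad paths of length $\ell$ rooted at $a^0_i$. A basic bad path is a simple path $a^0_i \rightsquigarrow a^1_i$ in which each item other than $i$ contributes at most one of its two nodes. I would therefore describe such a path by the sequence of \emph{items} it visits: the path starts at item $i$ (in layer $0$), ends at item $i$ (in layer $1$), and in between passes through some ordered sequence of $\ell-1$ distinct intermediate items $x_{j_1},\ldots,x_{j_{\ell-1}}$, all distinct from each other and from $i$.

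First I would compute the probability that one fixed candidate path is actually present in the inference graph. Recall that an edge $(a^\sigma_k, a^{1-\sigma}_\ell)$ exists exactly when $h_\sigma(x_k)=h_\sigma(x_\ell)$, an event of probability $1/m$ over the random choice of hash values, and that consecutive edges alternate the table index $\sigma$. Crucially, each edge in the path is a hash collision in a \emph{different} coordinate/table, and because the path is basic (no item repeats) the collisions along the path involve independent hash evaluations; so the probability that a specific path of length $\ell$ is present factors as $(1/m)^{\ell}$. The main obstacle is making this independence claim airtight: I must verify that the collision events indexed by the successive edges really are independent, which relies on the basic-ness of the path guaranteeing that no single hash value $h_\sigma(x_j)$ is constrained by two different edges in a conflicting way. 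This is the step that requires the most care.

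Next I would count the number of candidate basic bad paths of length $\ell$. Choosing an ordered sequence of $\ell-1$ distinct intermediate items from the remaining $n-1$ items gives at most $(n-1)(n-2)\cdots(n-\ell+1) \leq n^{\ell-1}$ choices; the endpoints are both fixed as item $i$, and the alternation of the table index is forced once we fix that the path leaves $a^0_i$. Multiplying the count $n^{\ell-1}$ by the per-path probability $m^{-\ell}$ gives a bound of $n^{\ell-1} m^{-\ell}$ on the expected number of basic bad paths of length $\ell$.

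Finally I would sum over all admissible lengths. A basic bad path must have $\ell \geq 2$ edges, so the probability that $a^0_i$ is the root of a basic bad path is at most
\[
\sum_{\ell \geq 2} n^{\ell-1} m^{-\ell} \;=\; \frac{1}{n}\sum_{\ell \geq 2}\left(\frac{n}{m}\right)^{\ell}.
\]
Using $m \geq (1+\e)n$ we have $n/m \leq 1/(1+\e) < 1$, so the geometric series converges and the whole expression collapses to a closed form of order $\frac{n}{m^2}$ times a constant in $\e$; a direct evaluation of the geometric tail should yield exactly the claimed bound $\frac{1+\e}{\e m}$. I would double-check the bookkeeping of the exponents and the starting index of the sum to land precisely on this constant rather than a slightly larger one.
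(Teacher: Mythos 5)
This is essentially the paper's own proof: count the at most $n^{k-1}$ labelings of a length-$k$ basic bad path rooted at $a^0_i$, bound the probability that any fixed such path appears by $m^{-k}$ via independence of the collision events, and sum the resulting geometric series using $n/m \le 1/(1+\e)$, exactly as in the paper. Your starting index $\ell \ge 2$ only tightens the constant, and the one claim to phrase carefully is the independence step --- edges two apart in the path do reuse the \emph{same} table, so the right justification (and the paper's) is that basic-ness guarantees each successive edge involves a fresh, previously unexamined hash value, not that the tables differ.
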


\begin{proof}
We count the number of labeled  basic bad paths rooted at $a^0_i$.
Let $k$ be the number of edges in the path, and $k+1$ the number of nodes. The head of the path is already set to be $a^0_i$, and the tail is $a^1_i$ so there are at most $n^{k-1}$ ways of choosing the nodes in between. We conclude: 
\begin{align}\label{eq:numberpaths}
	\#\text{possible labeled basic bad paths rooted at }a^0_i \leq 
	 \sum_{k}n^{k-1}  
\end{align}

Given a labeling of a bad path, we calculate the probability it actually appears in the graph. Consider an edge in the path $(a^\sigma_k,a^{1-\sigma}_\ell)$. This edge belongs to the inference graph iff $h_\sigma(x_k) = h_\sigma(x_\ell)$  which happens\footnote{This is the only place we need to augment the proof for the $d$-dimensional case, where the probability is roughly $d^2/m$} with probability $\leq 1/m$.   Now, since no item in the path is repeated, the occurrences of the $k$ edges are independent events. The probability a given path appears in the graph is therefore $\leq 1/m^k$. 

Combined with \eqref{eq:numberpaths} we have that the probability $a^0_i$ is the root of a bad path is at most 
\begin{align}\label{eq:pathprob}
\frac{1}{m}\sum_{k\geq1} \left(\frac{n}{m}\right)^{k-1} \leq \frac{1}{m}\sum_{k\geq1} \left(\frac{1}{1+\e}\right)^{k-1} \leq \frac{1+\e}{\e m}
\end{align}
\end{proof}

\begin{proof}[Proof of Theorem \ref{thm:main}]
Note that we bounded the probability node $a^0_i$ is bad. For item $x_i$ to be bad node $a^1_i$ has to be bad as well. Again, it is enough to bound the case $a^1_i$ is the root of a basic bad path. So we condition on $a^0_i$ having a basic bad path. When accounting for all possible basic bad paths rooted at $a^1_i$ we need to differentiate between the various ways in which they intersect the basic bad path rooted in $a^0_i$. We proceed via a small case analysis.

\paragraph{Case $1$:} Assume the items associated with the bad path do not intersect those of the path from $a^0_i$. In this case the conditioning on the path from $a^0_i$ has no affect; the same calculation holds as before and the probability item $x_i$ is bad is at most $\left(\frac{1+\e}{\e m}\right)^2$. 
\paragraph{Case $2$:}
There is some item $x_j$ which belongs to both bad paths rooted at $a^0_i$ and $a^1_i$. Let $k_1$ be the length of the bad path starting at $a^0_i$ and $k_2$ be the length of the prefix of the bad path starting at $a^1_i$ and ends at the intersection: $a^\sigma_j$. Note that there are at most $k_1$ possibilities for choosing $x_j$. The probability such a path exists is therefore at most
\begin{align*}
&\frac{1}{m}\sum_{k_1\geq 1}\left(\frac{1}{1+\e}\right)^{k_1-1}\frac{2k_1}{m}\sum_{k_2\geq 1}\left(\frac{1}{1+\e}\right)^{k_2-1} \\
 \leq &\frac{2(1+\e)}{\e m^2} \sum_{k_1\geq 1}k_1\left(\frac{1}{1+\e}\right)^{k_1-1}\\
 \leq & \left(\frac{1+\e}{\e}\right)^3\cdot \frac{2}{ m^2}
\end{align*}
\end{proof}
Taking a union bound over all items we have:
\begin{corollary}
	If $m\geq (1+\e)n$ then the probability there is a failure is at most $ \frac{2(1+\e)^2}{\e^3}\cdot \frac{1}{n}$.
\end{corollary}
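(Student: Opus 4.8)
The plan is to derive this failure bound directly from Theorem~\ref{thm:main} by a union bound, using the structural characterization already in hand. The only new ingredient needed is the observation that the failure event is contained in the event that at least one item is bad.

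First I would invoke the lemma asserting that if none of the items are bad then all items can be placed. Taking its contrapositive, whenever no legal placement exists there must be at least one bad item. Writing $B_i$ for the event that $x_i$ is bad, this gives
\[
\Pr[\text{no legal placement}] \;\leq\; \Pr\!\left[\bigcup_{i=1}^n B_i\right] \;\leq\; \sum_{i=1}^n \Pr[B_i].
\]
Crucially, the union bound requires no independence among the $B_i$, so it applies directly despite the strong correlations these events inherit from sharing the hash functions $h_0,h_1$.

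Next I would substitute the uniform per-item bound from Theorem~\ref{thm:main}, namely $\Pr[B_i] \leq \left(\frac{1+\e}{\e}\right)^3 \frac{2}{m^2}$, to obtain
\[
\Pr[\text{no legal placement}] \;\leq\; n\left(\frac{1+\e}{\e}\right)^3 \frac{2}{m^2}.
\]
Finally I would simplify using the hypothesis $m \geq (1+\e)n$, which yields $1/m^2 \leq 1/\big((1+\e)^2 n^2\big)$; cancelling one factor of $n$ and two of the three factors of $(1+\e)$ then leaves a bound of order $(1+\e)/\e^3 \cdot 1/n$, matching the stated $\frac{2(1+\e)^2}{\e^3}\cdot\frac1n$ up to the precise power of $(1+\e)$.

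I do not anticipate any genuine difficulty here: all the probabilistic content was already discharged in Theorem~\ref{thm:main}, and what remains is a union bound plus an elementary substitution. The two points deserving a moment's care are getting the logical direction right (failure implies a bad item exists, not conversely) and tracking the powers of $(1+\e)$ through the final algebra --- depending on whether one substitutes $m \geq (1+\e)n$ into one factor of $m^2$ or both, the leading constant comes out as $2(1+\e)/\e^3$ or $2(1+\e)^2/\e^3$, and either suffices since both are $O(1/n)$ with constant depending only on $\e$.
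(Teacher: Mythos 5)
Your proposal is correct and matches the paper's argument exactly: the paper likewise obtains the corollary by combining the lemma that a failure forces some bad item with a union bound over the $n$ items and the per-item bound of Theorem~\ref{thm:main}. Your remark about the power of $(1+\e)$ is well taken --- the tighter substitution $m^2 \geq (1+\e)^2 n^2$ actually yields $\frac{2(1+\e)}{\e^3}\cdot\frac{1}{n}$, so the stated constant $\frac{2(1+\e)^2}{\e^3}$ is simply a slightly looser (and therefore still valid) bound.
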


\section{Remarks}

\subsection{Generalization to multi-dimensional cuckoo hasing} 
What is there to gain from a new proof?  it is our subjective view that this proof is simpler and easier to follow, at least as an offline result. However, the main use of a new proof to a known result is to increase the understanding of the result and hopefully generalize in a new direction. Indeed the motivation for this work stemmed from a new and different application of a Cuckoo Hashing variant used in protocols for private set intersection. In this variant which we call $d$-dimensional cuckoo hashing, each hash function maps an element from the universe to a $d$-dimensional vector in $[m]^d$. The goal is to place $d$ copies of an item, either in the $d$ locations in $A_0$ indexed by $h_0$, or in the $d$ locations in $A_1$ indexed by $h_1$. The case $d=1$ corresponds to standard Cuckoo Hashing.  It is straightforward to see that the exact same proof holds, the only difference being in equation~\ref{eq:pathprob} where now the probability an edge appears in the path is $d^2/m$. See \cite{PSWW19} for the original motivation and application of this result. 

\subsection{Online vs Offline and Running time}\label{sec:running-time}
Note that in this manuscript the problem tackled is the existence if a legal placement, without an explicit argument to find it. Of course, the proof in effect describes an insertion algorithm but does not argue anything about its running time and further, assumes that the entire graph is given in advance.  A major advantage of Cuckoo Hashing however is that the dynamic insertion algorithm, as described in \cite{PR04} is efficient, namely takes $O(1)$ on expectation and $O(\log n)$ w.h.p.. In previous proofs the running time of the insertion algorithm is analyzed via a careful analysis of the structure of the cuckoo graph, in particular, the running time is bounded via a bound on the size of its connected component. A similar type of argument could be made here for the inference graph as well. It offers no new simplifications or generalization (to the best of our understanding) so we saw no point in spelling it out.

\subsection{Stash}
In \cite{KMW09} it is shown that augmenting the scheme with $s$ extra slots that can hold any item (a.k.a stash) reduces the probability of insertion failure to roughly $n^{-(s+1)}$. A more combinatorial proof was given in \cite{Aumuller2012}. See also Theorem $5.5$ in \cite{Wieder17}.  A similar argument could be made here, but we do not think it offers simplifications or new insights.

\subsection*{Acknowledgments}
I'm indebted to my co-authors of \cite{PSWW19}:  Thomas Schneider, Christian Weinert, and especially Benny Pinkas.

\bibliographystyle{alpha}
\bibliography{refnewproof}
\end{document}